\newcommand\JournalTitle[1]{#1}
\title{Modeling the diffusion of a fluid in a strained solid: a comparison between different formats}
\author{Giuseppe Tomassetti\footnote{Dipartimento di Ingegneria, Universit\`a degli Studi Roma Tre. Via Vito Volterra 62, 00146, Roma, Italy. Email:\texttt{giuseppe.tomassetti@uniroma3.it}, \texttt{ORCID:}\url{https://orcid.org/0000-0001-8801-7461}}}
\begin{document}
\begin{sloppypar}

\maketitle
\begin{abstract}
We revise the format proposed by Bowen in [R.~M. Bowen.
\newblock {\em\JournalTitle{Int. J. Eng. Sci.}},
  18(9):1129--1148, 1980.] to describe incompressible porous media by use of the theory of mixtures. We then show that this format is equivalent the those proposed more recently to model the diffusion of a fluid in a strained solid.
  \end{abstract}
\section{Introduction.}
Mechanical theories describing polymer gels, and, more generally, a strained solid through which diffusion of a fluid takes place, have been put together using two different approaches. The early papers \cite{Bowen1980,ShiRajagopalWineman1981} and the more recent work 
\cite{CaldeCLZ2010JoCaTN} model the gel as the superposition of two interacting continuous bodies: an incompressible solid, representing the polymeric network, and an incompressible fluid. The equations governing the evolution of these continua are obtained through an adaptation of Truesdell's theory of interacting continua \cite{Truesdell1962JCP}. Another series of papers \cite{BaekSrinivasa2004,HongZZS2008JMPS,ChestA2010JMPS,DudaSF2010JMPS}, adapt the original ideas of Gibbs \cite{Gibbs1878} and Biot \cite{Biot1972IUMJ,Biot1941Joap} to model a polymeric gel as a single continuum whereby diffusion of a chemical species takes place, driven by the gradient of its chemical potential. Within this theory, constituive equations are obtained from a dissipation principle that takes explicitly into account the energetic flux associated to the motion of the fluid relative to the solid. 

Given this state of matters provides a motivation to investigatate the relationships between the the two approaches. The contribution we give in this paper is twofold: we offer a revised exposition of the ideas in \cite{Bowen1980} in the special case when when the intercting continua are incompressible; we then show that the isothermal version of dissipation principle underlying the theory of gels within the setting of the \emph{\`a la Biot} approach can be recovered, by making the appropriate identifications, from the corresponding principle in the setting of the theory of interacting continua.

\section{The compound continuum}
\subsection{Kinematics} 
\paragraph{Index convention.}
In the foregoing developments, the free index $\alpha$ runs between $1$ and the number $N$ of constituents, while the free index $\beta$ runs between $2$ and $N$:
\[
\alpha\in \{1,\ldots,N\},\qquad\beta\in\{2,\ldots,N\}.
\]
The same convention is used for summations. Thus,
\[
\sum_\alpha\equiv\sum_{\alpha=1}^N,\qquad\text{and}\qquad \sum_\beta\equiv\sum_{\beta=2}^N.
\]
The subscript $1$ shall be replaced by $s$ (standing for ``solid'') and, when there is only one fluid component, the subscript $2$ shall be replaced by $f$ (standing for ``fluid''). Accordingly:
\begin{itemize}
\item if $N=2$ then $s\equiv 1$ and $f\equiv 2$.
\end{itemize}
\normalsize

\subsubsection{The description of the motion of a compound continuum.}
Generically speaking a \emph{compound continuum} is a collection of $N$ continuum bodies, which we refer to as \emph{constituents}, which can permeate each other by occupying the same region of space at the same time. The motion of each body is described in the same fashion as in standard continuum mechanics: one introduces a reference configurations $\mathcal B_\alpha$, where $\alpha=1,\ldots,N$ is the index that labels the constituent; then the motion of the $\alpha$--th constituent of the mixture is specified by a function
\[
\mb x=\bm\chi_\alpha(\bm X,t)
\]
which associates to each material particle $\mb X$ in $\mathcal B_\alpha$ the position $\mb x$ occupied by that particle at time $t$. The functions $\bm\chi_\alpha(\mb X,t)$ are assumed to be smooth. In particular, the deformation gradients 
\begin{equation}\label{eq:21}
\operatorname{Grad}\bm\chi_\alpha:=\PART{\bm\chi_\alpha}{\mb X},
\end{equation}
and the velocities
\begin{equation}\label{eq:22}
\dot{\bm\chi}_\alpha:=\PART{\bm\chi_\alpha}{t},
\end{equation}
are well defined. For $t$ fixed, the function $\bm\chi_\alpha(\cdot,t)$ is referred to as the \emph{configuration} of the  $\alpha$--th constituent at time $t$. \medskip

\subsubsection{Eulerian and Lagrangean descriptions of a field.}
Definitions \eqref{eq:21} and \eqref{eq:22} express the deformation gradient and the velocity as functions of the referential label $\mb X$. We say that these definitions provide a Lagrangean representation of these fields. 

Now, mixture theory inherits from continuum mechanics the requirement that each configuration be a one-to-one mapping, thus forbidding interpenetration between parts of the same body. Because of this fact, given a Lagrangean field we can provide a Eulerian representation where the independent variable $\mb X$ is replaced by $\mb x$. For example, we define the spatial (Eulerian) velocity field of constituent $\alpha$ as:\footnote{Here $\bm\chi^{-1}(\cdot,t)$ denotes the inverse of the configuration $\mb\chi_\alpha(\cdot,t)$ at time $t$.} 
\[
\mb v_\alpha(\mb x,t)=\dot{\bm\chi}_\alpha(\bm\chi_\alpha^{-1}(\mb x,t),t).
\]
Of course, the reciprocal operation starting from a spatial field $\varphi(\bm x,t)$ can be carried out to define its Lagrangean (referential) representation
\[
\Phi(\mb X,t)=\varphi(\bm\chi(\mb X,t),t).
\]
In particular, given a spatial field $\varphi_\alpha(\mb x,t)$ related to component $\alpha$, the \emph{material time derivative} of $\varphi$ \emph{following the motion of the constituent $\alpha$} is denoted by a grave accent:
\begin{equation}\label{eq:15}
  \grave\varphi_\alpha:=\PART{\Phi_\alpha}{t}=\PART{\varphi_\alpha}{t}+\gradop\varphi_\alpha\cdot\mb v_\alpha.
\end{equation}
When computing the time derivative of field that has already a subscript $\alpha$, we shall omit the double specification of the index $\alpha$ on top of the dot. If the field refers to a component $\beta$ different from $\alpha$, or to no component at all (in which case $\beta$ is to be omitted) we write
\begin{equation}\label{eq:23}
  \stackrel{\alpha}{\grave\varphi}_\beta:=\PART{\Phi_\beta}{t}=\PART{\varphi_\beta}{t}+\gradop\varphi_\beta\cdot\mb v_\alpha.
\end{equation}

\paragraph{Convention.}
Although $\varphi$ and $\Phi$ are different functions, they represent the same physical quantity. To avoid the proliferation of symbols, we shall avoid when possible using different symbols for the Lagrangian and Eulerian descriptions of a field. Which description is to be understood at a particular point of our development should be clear from the context.

\subsection{Partial stresses and internal forces}
A difference between mixture theory and conventional continuum theories is that different constituents can engage the same region of space. In fact, the images of two configurations $\bm\chi_\alpha(\cdot,t)$ and $\bm\chi_\beta(\cdot,t)$ ($\alpha\neq \beta$) need not be disjoint. In other words, two distinct bodies may interpenetrate each other and interact. The token used in mixture theory to model the interactions between superposed constituent is an internal body force field that accompanies contact forces and the external body forces. The equilibrium equation that affirm balance between stress of a constituent, the external and the internal forces acting on a constituent may be derived through the principle of virtual powers. Additional properties shall be derived by recourse to invariance principles.

\subsubsection{Internal and external powers.} Given a spatial region $\Omega$, the internal mechanical power expended within $\Omega$ is a linear functional of the individual velocities of the mixture costituents:
\begin{equation}\label{eq:61}
  \mathscr W_{\rm int}(\Omega)[\mb v_1,\ldots,\mb v_n]:=\sum_{\alpha}\int_\Omega \left(\mb T_\alpha\cdot\gradop\mb v_\alpha+\mb f_\alpha\cdot\mb v_\alpha\right).
\end{equation}
The external power expended on the part occupying the spatial region $\Omega$ is
\begin{equation}
  \mathscr W_{\rm ext}(\Omega)[\mb v_1,\ldots,\mb v_n]:=\sum_{\alpha}\left(\int_{\partial\Omega} \mb t_\alpha\cdot\mb v_\alpha+\int_\Omega \mb b_\alpha\cdot\mb v_\alpha\right).
\end{equation}

\subsubsection{Principle of virtual powers and balance equations.}

The principle of virtual powers for every region $\Omega$ yields the pointwise form of the equilibrium  equations
\begin{equation}\label{eq:35}
-\divop\mb T_\alpha+\mb f_\alpha=\mb b_\alpha,
\end{equation}
as well as the \emph{Cauchy's representation}:
\begin{equation}\label{eq:52}
\mb t_\alpha(\mb x,\mb n)=\mb  T_\alpha(\mb x)\mb n
\end{equation}
of the contact force $\mb t_\alpha$ acting at point $\mb x$ on the boundary of a region $\Omega$ whose outward unit normal is $\mb n$. 

\subsubsection{Invariance of the internal power.}
The invariance of the internal power over superposed rigid velocities yields that the internal forces add up to null:
\begin{equation}\label{eq:2}
\sum_\alpha\mb f_\alpha=\mb 0,
\end{equation}
and that the \emph{total stress}:
\[
\mb T:=\sum_\alpha\mb T_\alpha\in\Sym.
\]
is a symmetric tensor:
\[
\mb T\in\Sym.
\]
The fields $\mathbf T_\alpha$ are called \emph{partial stresses}. Although they need not be symmetric, their sum must be symmetric.

\subsection{Further  bits of kinematics}
\subsubsection{The bulk specific volume.}
Since we leave inertial effects out of the picture, we shall not introduce the concept of mass and mass density. Instead, we shall make use of the notion of bulk specific volumes
\[
\phi_\alpha=\frac 1 {\det\mb F_\alpha},
\]
which obey the same laws of conservation as for mass in continuum mechanics:
\begin{lemma}\label{lem:222}
The bulk specific volumes obey:
\begin{subequations}
\begin{equation}
\grave\phi_\alpha+\phi_\alpha\divop{\mb v_\alpha}=0,
\end{equation}
and, equivalently,
\begin{equation}\label{eq:11}
\PART{\phi_\alpha}{t}+\divop(\phi_\alpha\mb v_\alpha)=0.
\end{equation}
\end{subequations}
\end{lemma}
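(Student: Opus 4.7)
The plan is to reduce the lemma to the classical Jacobi/Euler identity $\dot J = J\divop\mb v$ for the Jacobian of a motion, applied individually to each constituent, and then translate the resulting Lagrangean identity into the Eulerian continuity form by means of the definition \eqref{eq:15} of the material time derivative.

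First I would set $J_\alpha:=\det\mb F_\alpha$ with $\mb F_\alpha:=\operatorname{Grad}\bm\chi_\alpha$, so that by definition $\phi_\alpha=J_\alpha^{-1}$. The key computation is the derivative of $J_\alpha$ along the motion of constituent $\alpha$. Using the Jacobi formula together with $\partial_t\mb F_\alpha = \operatorname{Grad}\dot{\bm\chi}_\alpha = (\gradop\mb v_\alpha)\mb F_\alpha$, one obtains
\[
\grave J_\alpha = J_\alpha\,\mb F_\alpha^{-\top}\cdot\partial_t\mb F_\alpha = J_\alpha\,\operatorname{tr}(\gradop\mb v_\alpha)=J_\alpha\divop\mb v_\alpha.
\]
Differentiating $\phi_\alpha=J_\alpha^{-1}$ along the motion of constituent $\alpha$ by the chain rule then yields
\[
\grave\phi_\alpha=-J_\alpha^{-2}\,\grave J_\alpha=-J_\alpha^{-1}\divop\mb v_\alpha=-\phi_\alpha\divop\mb v_\alpha,
\]
which is the first identity in the lemma.

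For the second (Eulerian) form I would invoke the definition \eqref{eq:15} of the grave-accent derivative, namely $\grave\phi_\alpha=\partial_t\phi_\alpha+\gradop\phi_\alpha\cdot\mb v_\alpha$, and the standard product rule $\divop(\phi_\alpha\mb v_\alpha)=\phi_\alpha\divop\mb v_\alpha+\gradop\phi_\alpha\cdot\mb v_\alpha$. Substituting the first of these into the identity just proved and rearranging via the second gives $\partial_t\phi_\alpha+\divop(\phi_\alpha\mb v_\alpha)=0$.

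No step presents a real obstacle: the entire proof is a transcription to mixture notation of the textbook derivation of the continuity equation from the Jacobi identity. The only point requiring mild care is to make sure the time derivative in each formula is taken following the motion of the correct constituent $\alpha$, so that $\partial_t\mb F_\alpha=(\gradop\mb v_\alpha)\mb F_\alpha$ holds with the same index $\alpha$ throughout; this is automatic because each constituent has its own smooth motion $\bm\chi_\alpha$, and the computations above involve a single index at a time.
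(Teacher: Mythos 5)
Your proof is correct: the reduction of $\phi_\alpha=1/\det\mb F_\alpha$ to the Jacobi identity $\grave J_\alpha=J_\alpha\divop\mb v_\alpha$ for each constituent, followed by the chain rule and the product rule for $\divop(\phi_\alpha\mb v_\alpha)$, establishes both forms of the lemma. The paper itself states Lemma~\ref{lem:222} without proof, appealing to the fact that the bulk specific volumes ``obey the same laws of conservation as for mass,'' and your argument is exactly the standard derivation being invoked there, so there is nothing to reconcile.
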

The interpretation of the bulk specific volume is the following: consider a part $\mathcal P_\alpha$ of the body $\mathcal B_\alpha$. At a given time $t_0$, the material particles within the part $\mathcal P_\alpha$ will engage the spatial region 
\[
\Omega_\alpha=\bm\chi_\alpha(\mathcal P_\alpha,t_0).
\]
If the diameter of  $\mathcal P_\alpha$ is small compared to the characteristic scale of oscillation of the fields of interest (typically, this scale may be identified with the norm $|\mathbf F_\alpha|$ of the deformation gradient), then the volume of the spatial region $\Omega_\alpha$ is approximately:\footnote{Let consider any extensive physical quantity that measures the amount of material of constituent (mass, number of molecules, etc.). Let $c_\alpha$ and $C_\alpha$ be the densities of this quantity in the current and reference configuration. Under the assumption that there are no chemical reactions that convert one constituent into another, the amount of material contained in a part $\mathcal P_\alpha$ is equal to that contained in the region $\bm\chi_\alpha(\mathcal P_\alpha,t)$ occupied by $\mathcal P_\alpha$ at time $t$:
\[
\int_{\mathcal P_\alpha}C_\alpha(\mb X)=\int_{\bm\chi_\alpha(\mathcal P_\alpha,t)}c_\alpha(\mb x,t).
\]
This yields
\[
\phi_\alpha=\frac{c_\alpha}{C_\alpha}.
\]
In particular, if $C_\alpha$ does not depend on $X$ in the reference configuration, then $\phi_\alpha$ is proportional to the density $c_\alpha$. }
\[
\operatorname{vol}\Omega_\alpha\simeq \phi_\alpha \operatorname{vol}\mathcal P_\alpha.
\]
Thus, $\phi_\alpha$ accounts for how much of spatial volume is engaged by a unit volume in the reference configuration.

\subsubsection{The true specific volume.}
When referring to the region $\Omega_\alpha$ as being ``engaged'' by the part $\mathcal P_\alpha$ we purportedly do not use the verbal adjective ``occupied''. In fact, there is a class of physical systems, called \emph{multicomponent mixtures}, that can be modeled by the theory of interacting continua (for example, a mixture of tiny air bubbles in water), but whose constituents are not truly  superposed. For these systems, an observation performed at an intermediate scale, smaller than that at which the theory gives good results (but still larger than that at which the continuum hypothesis breaks down) reveals that the constituents are not at all superposed: they occupy disjoint regions within $\mathcal S_\alpha$. Although these regions are so finely mixed that they cannot be distinguished at the macroscopic scale, within each region one can find only one component, whose true specific volume need not be the same as its bulk specific volume.

\subsubsection{The incompressibility constraint.}
For a single constituent, the incompressibility constraint consists in the requirement that the true specific volume be equal to 1. By drawing a cartoon one immediately realizes that:
\begin{itemize}
\item for a multicomponent mixture whose constituents are incompressible, the bulk specific volume $\phi_\alpha$ is the  fraction of the volume occupied by the $\alpha$--th constituent in a mesoscopic region of space.
\end{itemize}
Since volume fractions add up to 1, for an incompressible multicomponent mixture the following incompressibility constraint holds: 
\begin{equation}\label{eq:9}
\sum_\alpha\phi_\alpha=1.
\end{equation}
\begin{lemma}[Velocity constraint imposed by incompressibility]
Every velocity field consistent with \eqref{eq:9} must satisfy
\begin{equation}\label{eq:7}
\sum_\alpha\left(\phi_\alpha\divop{\mb v_\alpha}+\gradop\phi_\alpha\cdot\mb v_\alpha\right)=0.  
\end{equation}
\end{lemma}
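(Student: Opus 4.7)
The plan is to obtain \eqref{eq:7} as a pointwise consequence of differentiating the algebraic constraint \eqref{eq:9} in time and then substituting the conservation law for each bulk specific volume given by Lemma~\ref{lem:222}.

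More concretely, I would proceed as follows. First, I interpret ``velocity field consistent with \eqref{eq:9}'' as meaning that the motions $\bm\chi_\alpha(\cdot,t)$ produce bulk specific volumes $\phi_\alpha(\mb x,t)=1/\det\mb F_\alpha$ whose sum equals $1$ at every point $\mb x$ and every time $t$, not merely at a single instant. Consequently $\partial_t\sum_\alpha\phi_\alpha=0$ pointwise. Next, I would invoke equation \eqref{eq:11} of Lemma~\ref{lem:222}, which is already available and gives
\[
\PART{\phi_\alpha}{t}=-\divop(\phi_\alpha\mb v_\alpha)
\]
for each $\alpha$. Summing over $\alpha$ and exchanging sum with time derivative yields
\[
0=\PART{}{t}\sum_\alpha\phi_\alpha=-\sum_\alpha\divop(\phi_\alpha\mb v_\alpha).
\]
Finally, I would expand $\divop(\phi_\alpha\mb v_\alpha)=\phi_\alpha\divop\mb v_\alpha+\gradop\phi_\alpha\cdot\mb v_\alpha$ inside the sum and change sign to obtain \eqref{eq:7}.

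There is really no serious obstacle here: the argument is three lines of calculus once Lemma~\ref{lem:222} is in hand. The only conceptual point worth stating carefully is the one in the first step, namely that ``consistency'' with the algebraic constraint must be read as consistency \emph{in time}, so that differentiation is legitimate; once this is granted, everything else is a substitution. I would therefore present the proof very briefly, flagging only this interpretive remark, and let the computation speak for itself.
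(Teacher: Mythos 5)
Your proposal is correct and follows exactly the route the paper intends: differentiate the constraint $\sum_\alpha\phi_\alpha=1$ at a fixed spatial point, substitute the Eulerian conservation law \eqref{eq:11} from Lemma~\ref{lem:222}, and expand the divergences (the paper leaves this proof implicit, and its subsequent remark about $\sum_\alpha\grave\phi_\alpha$ confirms that the partial time derivative at fixed $\mb x$, not the constituent-following derivative, is the right one — a pitfall you correctly avoid).
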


\begin{remark}
\rm One would be tempted to argue from \eqref{eq:9} that $\sum_\alpha\grave\phi_\alpha=0$. Note however that the partial derivative in this sum are performed following the individual constituents. 
\end{remark}

\subsubsection{The mean velocity.}
We define the material derivative of $\varphi$ following the motion of the mixture as
\begin{equation}
\dot\varphi:=\partial_t\varphi+\gradop\varphi\cdot\mb v,
\end{equation}
where  
\begin{equation}\label{eq:8}
  \mb v=\sum_\alpha\phi_\alpha\mb v_\alpha.
\end{equation}
is the average velocity of the mixture. 
\begin{lemma}
\label{lem:1}
Assume that $\varphi$ is continuously differentiable with respect to $\mb x$ and $t$. Then
\[
\dot\varphi=\sum_\alpha \phi_\alpha\stackrel{\alpha}{\grave\varphi},
\]
\end{lemma}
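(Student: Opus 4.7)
The plan is to prove the identity by direct substitution, starting from the definitions of the two sides and then invoking the incompressibility constraint \eqref{eq:9} together with the definition \eqref{eq:8} of the mean velocity.

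First, I would expand the right-hand side using \eqref{eq:23}. Since the differentiation indicated by $\stackrel{\alpha}{\grave{\,\,}}$ is along $\mb v_\alpha$, I have
\[
\sum_\alpha \phi_\alpha \stackrel{\alpha}{\grave\varphi} = \sum_\alpha \phi_\alpha\left(\PART{\varphi}{t} + \gradop\varphi\cdot\mb v_\alpha\right).
\]
The hypothesis that $\varphi$ is continuously differentiable in $\mb x$ and $t$ is what justifies using this representation of the material derivative along each $\mb v_\alpha$.

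Next, I would split the sum into two pieces. The first piece factors $\partial_t\varphi$ outside the sum, leaving $\bigl(\sum_\alpha\phi_\alpha\bigr)\partial_t\varphi$, which reduces to $\partial_t\varphi$ by the incompressibility constraint \eqref{eq:9}. The second piece can be rewritten as $\gradop\varphi\cdot\sum_\alpha\phi_\alpha\mb v_\alpha$, which by definition \eqref{eq:8} of the mean velocity equals $\gradop\varphi\cdot\mb v$. Combining the two pieces recovers exactly $\partial_t\varphi + \gradop\varphi\cdot\mb v$, which is $\dot\varphi$ by definition.

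There is no real obstacle here: the only non-trivial input is the fact that the volume fractions sum to one, which is the content of the incompressibility assumption \eqref{eq:9}. The lemma should therefore be viewed as a convenient reformulation of the chain rule in the context of a mixture with incompressible constituents, showing that the material derivative along the mean velocity is simply the volume-weighted average of the constituent material derivatives.
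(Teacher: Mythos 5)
Your proof is correct and is exactly the intended (and essentially only) argument: the paper states Lemma \ref{lem:1} without proof, and the computation it implicitly relies on is precisely your expansion of $\sum_\alpha\phi_\alpha\stackrel{\alpha}{\grave\varphi}$ followed by factoring out $\partial_t\varphi$ and $\operatorname{grad}\varphi$, using \eqref{eq:8}. You also correctly identify the one substantive ingredient, namely that the weights sum to one via the incompressibility constraint \eqref{eq:9}, without which the time-derivative term would carry the factor $\sum_\alpha\phi_\alpha$.
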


\begin{remark}[The velocity of the mixture]
{\rm
The choice of the weights in \eqref{eq:8} appears arbitrary. As a possible selection criterion for a definition of the velocity $\mb v$ of the mixture would be to asking that the power expended by the total stress on the velocity be equal to the sum of the individual powers expended within the constitutents:
\[
\mathbf T\cdot\gradop\mathbf v=\sum_{\alpha}\left(\mathbf T_\alpha\cdot\gradop\mb v_\alpha+\mb f_\alpha\cdot\mb v_\alpha\right).
\]
}
\end{remark}

\subsection{Three formulations of the dissipation principle for a system of superposed continua}
\subsubsection{Partwise and pointwise formulation.}
Given a region $\Omega$ let $\Omega_\alpha(t)$ be a time dependent region convecting with the component $\alpha$
\[
\Omega_\alpha(t)=\bm\chi_\alpha(\mathcal P_\alpha,t),
\]
such that
\[
\Omega_\alpha(t_0)=\Omega.
\]
The dissipation principle dictates that there exists  referential state fields $\Psi_\alpha$ satisfying the inequality
\[
\sum_\alpha\left.\Dt\right|_{t=t_0}\int_{\mc P_\alpha}\Psi_\alpha-\int_\Omega\lambda\sum_\alpha\left(\phi_\alpha\divop{\mb v_\alpha}+\gradop\phi_\alpha\cdot\mb v_\alpha\right)\le \mathscr W_{\rm int}(\Omega)[\mb v_1,\dots,\mb v_n]
\]
for every realizable process and for every region $\Omega$. Here $\lambda$ is the pressure, the Lagrange multiplier associated to the incompressibility constraint \eqref{eq:7}.
\begin{proposition}
If the dissipation inequality holds true for every region $\Omega$ and for every process, then the following inequality holds pointwise:\footnote{Here $\Psi_\alpha$ stands for the Eulerian representation of the specific free energy.}
\[
\sum_\alpha\left(-\phi_\alpha\grave\Psi_\alpha+(\mb T_\alpha+\phi_\alpha\lambda\mb I)\cdot\gradop\mb v_\alpha+(\mb f_\alpha+\lambda\gradop\phi_\alpha)\cdot\mb v_\alpha\right)\ge 0. 
\]
\end{proposition}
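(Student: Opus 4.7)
The plan is to recast the integral dissipation inequality into the form $\int_\Omega F \ge 0$ for every region $\Omega$ and then, by arbitrariness of $\Omega$, read off the pointwise inequality $F\ge 0$. The only non-bookkeeping step is to rewrite the Lagrangian-looking sum $\sum_\alpha\left.\tfrac{d}{dt}\right|_{t=t_0}\int_{\mathcal P_\alpha}\Psi_\alpha$ as a spatial integral over $\Omega$; the mechanism is the change of variables $\mb x=\bm\chi_\alpha(\mb X,t_0)$, whose Jacobian is $\det\mb F_\alpha=1/\phi_\alpha$.

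Since the part $\mathcal P_\alpha$ does not depend on $t$, the derivative passes inside the integral, yielding $\int_{\mathcal P_\alpha}\partial_t\Psi_\alpha\,d\mb X$. By definition \eqref{eq:15}, the operator $\partial_t$ at fixed $\mb X$ applied to the Lagrangian representation of $\Psi_\alpha$ coincides with the grave-accent derivative $\grave\Psi_\alpha$ evaluated along the motion $\bm\chi_\alpha(\mb X,t)$. Performing the change of variables at $t=t_0$ and using $\Omega_\alpha(t_0)=\Omega$ one obtains
\[
\left.\frac{d}{dt}\right|_{t=t_0}\int_{\mathcal P_\alpha}\Psi_\alpha=\int_\Omega \phi_\alpha\,\grave\Psi_\alpha.
\]
(The same identity follows equivalently from Reynolds' transport theorem applied to $\int_{\Omega_\alpha(t)}\phi_\alpha\Psi_\alpha$, combined with the volume-conservation law \eqref{eq:11} of Lemma \ref{lem:222}.)

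Substituting this expression together with the definition \eqref{eq:61} of $\mathscr W_{\rm int}(\Omega)$, moving every term to the right-hand side, and using the identity $\phi_\alpha\lambda\,\divop\mb v_\alpha=(\phi_\alpha\lambda\mb I)\cdot\gradop\mb v_\alpha$, the dissipation inequality takes the form
\[
\int_\Omega\sum_\alpha\Bigl(-\phi_\alpha\grave\Psi_\alpha+(\mb T_\alpha+\phi_\alpha\lambda\mb I)\cdot\gradop\mb v_\alpha+(\mb f_\alpha+\lambda\gradop\phi_\alpha)\cdot\mb v_\alpha\Bigr)\ge 0.
\]
Since $\Omega$ is arbitrary and the integrand is continuous under the smoothness hypotheses already in force, a standard localization (shrinking $\Omega$ to a point) yields the asserted pointwise inequality.

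The main, and essentially the only, substantive step is the first: tracking how the Jacobian of the component-$\alpha$ configuration supplies precisely the factor $\phi_\alpha$ that attaches to $\grave\Psi_\alpha$ in the final pointwise expression. Everything else is algebraic rearrangement and localization.
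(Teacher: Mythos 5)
Your proof is correct and follows what is essentially the intended argument (the paper states this proposition without spelling out a proof): the change of variables with Jacobian $\det\mb F_\alpha=1/\phi_\alpha$, which turns $\left.\frac{d}{dt}\right|_{t=t_0}\int_{\mathcal P_\alpha}\Psi_\alpha$ into $\int_\Omega\phi_\alpha\grave\Psi_\alpha$ via \eqref{eq:15}, followed by the algebraic regrouping of the $\lambda$-terms and localization over arbitrary $\Omega$, is exactly the required chain of steps. Your alternative justification via Reynolds' transport theorem together with Lemma \ref{lem:222} is also consistent with the paper's kinematic identities.
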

\subsubsection{Pointwise formulation using spatial densities: the chemical--potential tensor.}
On introducing the \emph{spatial} free energy densities
\[
\psi_\alpha=\phi_\alpha\Psi_\alpha,
\]
and owing to the identities  ${\grave\psi}_\alpha={\grave\phi}_\alpha\Psi_\alpha+\phi_\alpha{\grave\Psi}_{\alpha}$ and ${\grave\phi}_\alpha=-\phi_\alpha\divop\mb v_\alpha$, we can rewrite the pointwise dissipation inequality as 
\begin{equation}\label{eq:12}
  -\sum_\alpha\left({\grave\psi}_\alpha+\phi_\alpha(\mb K_\alpha-\lambda\mb I)\cdot\gradop\mb v_\alpha\right)+\sum_{\beta}(\mb f_\beta+\lambda\gradop\phi_\alpha)\cdot\mb v_\alpha\ge 0,
\end{equation}
where
\begin{equation}\label{eq:34}
\mb K_\alpha:=\Psi_\alpha\mb I-\phi_\alpha^{-1}\mb T_\alpha
\end{equation}
is called \emph{chemical potential tensor} of the $\alpha$--th constituent.\footnote{See Eq. 2.42 of the paper \cite{Bowen1980} by Bowen.}

\subsubsection{Formulation using spatial control volumes: the total free energy and the boundary flux.}
We introduce the total free energy density:
\[
\psi=\sum_{\alpha} \psi_\alpha.
\]
\begin{proposition}
The dissipation inequality is equivalent to
\begin{equation}\label{eq:27}
\Dt\int_\Omega \psi+\int_{\partial\Omega}\psi_\alpha\mb v_\alpha\cdot\mb n\le \mathscr W_{\rm int}(\Omega)[\mb v_1,\dots,\mb v_n].
\end{equation}
\end{proposition}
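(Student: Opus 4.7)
The strategy is to show that the partwise dissipation inequality collapses into \eqref{eq:27} by means of two ingredients: (i) Reynolds' transport theorem applied to each moving region $\Omega_\alpha(t)=\bm\chi_\alpha(\mathcal P_\alpha,t)$, which converts the referential time derivatives into spatial expressions involving a volumetric term and a boundary flux; and (ii) the observation that the constraint term multiplied by the Lagrange multiplier $\lambda$ vanishes identically on realizable processes, thanks to \eqref{eq:7}. Since every step in the reduction is reversible, the equivalence between the two formulations will follow.

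\textbf{Step 1: change of variables.} I will first use the identity $\psi_\alpha=\phi_\alpha\Psi_\alpha$ together with $\det\mb F_\alpha\,\mathrm d\mb X=\mathrm d\mb x$, i.e.\ $\mathrm d\mb X=\phi_\alpha\,\mathrm d\mb x$, to rewrite each reference-configuration integral as a spatial integral over the moving region: $\int_{\mathcal P_\alpha}\Psi_\alpha=\int_{\Omega_\alpha(t)}\psi_\alpha$. Because $\mathcal P_\alpha$ does not depend on time, its time derivative equals $\left.\Dt\right|_{t_0}\int_{\Omega_\alpha(t)}\psi_\alpha$, and Reynolds' transport theorem (applied with the convective velocity $\mb v_\alpha$) yields
\[
\left.\Dt\right|_{t_0}\int_{\mathcal P_\alpha}\Psi_\alpha=\int_{\Omega}\PART{\psi_\alpha}{t}+\int_{\partial\Omega}\psi_\alpha\mb v_\alpha\cdot\mb n,
\]
since $\Omega_\alpha(t_0)=\Omega$.

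\textbf{Step 2: summation and control volume.} Summing the previous identity over $\alpha$ and using $\psi=\sum_\alpha\psi_\alpha$, the pointwise partial derivatives combine and, because $\Omega$ is held fixed in space, $\int_\Omega\partial_t\psi=\Dt\int_\Omega\psi$. This produces exactly the left-hand side of \eqref{eq:27} (where the boundary term is to be read as $\sum_\alpha\int_{\partial\Omega}\psi_\alpha\mb v_\alpha\cdot\mb n$, the summation sign having been dropped in the statement).

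\textbf{Step 3: disposing of the Lagrange multiplier term.} For any process admissible for the incompressibility constraint \eqref{eq:9}, the velocity fields satisfy \eqref{eq:7} pointwise, hence the integrand $\lambda\sum_\alpha(\phi_\alpha\divop\mb v_\alpha+\gradop\phi_\alpha\cdot\mb v_\alpha)$ vanishes identically. Therefore the Lagrange multiplier contribution drops out of the partwise inequality, and what remains is precisely \eqref{eq:27}. Since the right-hand side $\mathscr W_{\rm int}(\Omega)$ is untouched throughout, the two inequalities are equivalent. The only delicate point, more bookkeeping than substance, is to track correctly the summation index and verify that Reynolds' theorem is being applied to the constituent-specific region $\Omega_\alpha(t)$ (convecting with $\mb v_\alpha$) rather than to the mixture-averaged one.
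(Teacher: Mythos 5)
Your proof is correct and follows the same route the paper relies on (and uses explicitly later for the two-constituent case): change of variables $\int_{\mathcal P_\alpha}\Psi_\alpha=\int_{\Omega_\alpha(t)}\psi_\alpha$, Reynolds' transport theorem for each region convecting with $\mb v_\alpha$ evaluated at $t_0$, summation over $\alpha$, and the observation that the Lagrange-multiplier term vanishes on constraint-compatible processes, every step being reversible. Your reading of the boundary term in \eqref{eq:27} as carrying an implicit sum over $\alpha$ is also the intended one.
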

\noindent The second term on the left--hand side is interpreted as the boundary influx of free energy.

\subsection{Constitutive equations}
\subsubsection{Partial free energies.} We assume that the referential free energy densitiy of each component $\alpha$ depends on the deformation gradient of the solid component and on the volume fractions of the fluid components:
\begin{equation}\label{eq:21}
\Psi_\alpha=\widehat\Psi_{\alpha}(\mathbf F,\phi_2,\ldots,\phi_N)\equiv \widehat\Psi_{\alpha}(\mathbf F,\phi_\beta),\qquad 
\end{equation}
Here and in the following we omit specifying the subscript of the deformation gradient of the solid component:\footnote{This will not be source of ambiguities, since it is only the deformation gradient of the solid component that is relevant to our development.} 
\[
\mb F\equiv\mb F_s.
\]
As apparent from \eqref{eq:21}, the free energy density of the $\alpha$--th constituent depends not only on the configuration $\bm\chi_\alpha(\cdot,t)$ of that constituent, but also on the configurations of the other constituents. The corresponding spatial free energy densities obey the following constitutive precriptions:\footnote{Here the symbol  $\phi_\beta$ in the argument of $\widehat\psi_s$ and $\widehat\Psi_s$ stands for the ordered list $(\phi_\beta)_{\beta\ge 2}$ of the volume fractions of the fluid components.}
\[
\widehat{\psi}_s(\mathbf F,\phi_\beta)=\det{\mb F}^{-1}\widehat\Psi_s(\mathbf F,\phi_\beta),\qquad 
\widehat\phi_\beta(\mathbf F,\phi_\beta)=\phi_\beta\widehat\Psi_\beta(\mathbf F,\phi_\beta).
\]
The total free energy of the mixture per unit spatial volume is the sum of the individual energies
\begin{equation}
  \label{eq:1}
\psi=\sum_{\alpha} \widehat\psi_\alpha(\mathbf F,\phi_\beta)=:\widehat\psi(\mb F,\phi_\beta).  
\end{equation}
It is important to notice that the free energy of the $\alpha$--th component of the mixture depends not only on $\mb F_\alpha$, but also on the deformation gradients of all other components. \medskip
\begin{proposition}\label{prop:1}
The dissipation inequality \eqref{eq:12} is satisfied for all admissible processes if and only if
\begin{equation}\label{eq:30}
  -\sum_{\alpha}\phi_\alpha\mb K_\alpha^{(d)}\cdot\gradop\mb v_\alpha+\sum_{\alpha}\mb f_\alpha^{(d)}\cdot\mb v_\alpha\ge 0,
\end{equation}
where
\begin{subequations}\label{eq:33}
\begin{align}
  &\mb K_s^{(d)}=\mb K_s-\lambda\mb I+\phi_s^{-1}\frac{\partial\psi}{\partial\mb F}\mb F^{T},\label{eq:33a}\\
  &\mb K_\beta^{(d)}=\mb K_\beta-\lambda\mb I-\frac{\partial\psi}{\partial\phi_\beta}\mb I,\qquad\text{for}\qquad\beta\ge 2,
\end{align}
\end{subequations}
and
\begin{subequations}\label{eq:28}
\begin{align}
&\mb f_s^{(d)}=\mb f_s+\lambda\gradop\phi_s-\gradop\psi_s+\PART{\psi}{\mb F}:\gradop\mb F,\\
&\mb f_\beta^{(d)}=\mb f_\beta+\lambda\gradop\phi_\beta-\gradop\psi_\beta+\PART{\psi}{\phi_\beta}\gradop\phi_\beta,\quad\beta\ge 2.
\end{align}
\end{subequations}
where $\PART{\Psi}{\mb F}:\gradop\mb F=\PART{\psi}{F_{ij}}{F_{ij,k}}\mb e_k$.
\end{proposition}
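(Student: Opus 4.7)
The statement is essentially an algebraic identity dressed as an equivalence of inequalities: the plan is to substitute the constitutive prescription $\psi=\widehat\psi(\mathbf F,\phi_\beta)$ into the left-hand side of \eqref{eq:12} and to show that it coincides term by term with the left-hand side of \eqref{eq:30} once the dissipative chemical potential tensors $\mathbf K_\alpha^{(d)}$ and dissipative internal forces $\mathbf f_\alpha^{(d)}$ of \eqref{eq:33}--\eqref{eq:28} are recognised. The ``if and only if'' is then immediate because the two sides agree pointwise for every admissible process; no Coleman--Noll freeness argument is needed.

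The heart of the computation is the evaluation of $\sum_\alpha\grave\psi_\alpha$. Starting from the Eulerian decomposition
\begin{equation*}
\sum_\alpha\grave\psi_\alpha=\partial_t\psi+\gradop\psi_s\cdot\mathbf v_s+\sum_\beta\gradop\psi_\beta\cdot\mathbf v_\beta,
\end{equation*}
I would apply the chain rule to $\psi=\widehat\psi(\mathbf F,\phi_\beta)$ and eliminate the Eulerian partial time derivatives through two identities already at hand: the kinematic relation $\stackrel{s}{\grave{\mathbf F}}=(\gradop\mathbf v_s)\mathbf F$ gives $\partial_t\mathbf F=(\gradop\mathbf v_s)\mathbf F-\gradop\mathbf F\cdot\mathbf v_s$ for the solid, while Lemma~\ref{lem:222} gives $\partial_t\phi_\beta=-\phi_\beta\divop\mathbf v_\beta-\gradop\phi_\beta\cdot\mathbf v_\beta$ for each fluid volume fraction. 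Assembling the pieces I expect to reach
\begin{align*}
\sum_\alpha\grave\psi_\alpha &=\frac{\partial\psi}{\partial\mathbf F}\mathbf F^{T}:\gradop\mathbf v_s-\sum_\beta\phi_\beta\frac{\partial\psi}{\partial\phi_\beta}\divop\mathbf v_\beta\\
&\quad+\Bigl(\gradop\psi_s-\frac{\partial\psi}{\partial\mathbf F}:\gradop\mathbf F\Bigr)\cdot\mathbf v_s+\sum_\beta\Bigl(\gradop\psi_\beta-\frac{\partial\psi}{\partial\phi_\beta}\gradop\phi_\beta\Bigr)\cdot\mathbf v_\beta.
\end{align*}

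Inserting this identity into \eqref{eq:12} and regrouping, the tensorial coefficients of $\gradop\mathbf v_s$ and $\gradop\mathbf v_\beta$ are exactly the corrections that turn $\phi_\alpha(\mathbf K_\alpha-\lambda\mathbf I)$ into $\phi_\alpha\mathbf K_\alpha^{(d)}$ in the sense of \eqref{eq:33}, while the vectorial coefficients of $\mathbf v_s$ and $\mathbf v_\beta$ are exactly what is needed to turn $\mathbf f_\alpha+\lambda\gradop\phi_\alpha$ into $\mathbf f_\alpha^{(d)}$ in the sense of \eqref{eq:28}. Consequently \eqref{eq:30} is pointwise identical to \eqref{eq:12}, and the equivalence of the two inequalities on every admissible process follows at once.

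The only genuine bookkeeping hurdle I foresee is the handling of the triple contraction $\frac{\partial\psi}{\partial\mathbf F}:\gradop\mathbf F$---the spatial vector with components $\frac{\partial\psi}{\partial F_{ij}}F_{ij,k}$ recorded after \eqref{eq:28}---and the clean distinction between the Eulerian gradient of the composite field $\psi(\mathbf x,t)=\widehat\psi(\mathbf F(\mathbf x,t),\phi_\beta(\mathbf x,t))$ and the partial derivatives of $\widehat\psi$ with respect to its arguments. Beyond that, the argument is nothing but the chain rule together with the kinematic identities already established for the compound continuum.
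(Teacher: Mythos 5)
Your proposal is correct and follows essentially the same route as the paper's own (sketched) proof: you compute $\sum_\alpha\grave\psi_\alpha$ via the chain rule on $\widehat\psi(\mathbf F,\phi_\beta)$, using the kinematic identity $\gradop\mathbf v_s=\dot{\mathbf F}\mathbf F^{-1}$ and the balance \eqref{eq:11}, and your intermediate identity coincides term by term with the one displayed in the paper, after which substitution into \eqref{eq:12} and regrouping of the coefficients of $\gradop\mathbf v_\alpha$ and $\mathbf v_\alpha$ gives \eqref{eq:30} with \eqref{eq:33}--\eqref{eq:28}. Your observation that the equivalence is a pointwise algebraic identity, requiring no Coleman--Noll argument, is likewise consistent with the paper's treatment.
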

\begin{proof}[Proof (sketch)]
A routine computation gives based on \eqref{eq:15}, \eqref{eq:11}, and on the identity
\[
\gradop\mb v_s=\dot{\mb F}\mb F^{-1},
\] 
yields
\[
\begin{aligned}
\sum_\alpha\grave\psi_\alpha=&\sum_\alpha\frac{\partial\psi_\alpha}{\partial t}+\sum_\alpha \gradop\psi_\alpha\cdot\mb v_\alpha\\
=&\frac{\partial\psi}{\partial t}+\sum_\alpha \gradop\psi_\alpha\cdot\mb v_\alpha\\
=&\PART\psi{\mb F}\mb F^T\cdot\gradop\mb v_s-\sum_\beta\phi_\beta\PART{\psi}{\phi_\beta}\divop\mb v_\beta\\
&+\gradop\psi_s\cdot\mb v_s-\PART\psi{\mb F}\cdot\gradop\mb F[\mb v_s]\\
&+\sum_\beta(\gradop\psi_\beta-\PART{\psi}{\phi_\beta}\gradop\phi_\beta)\cdot\mb v_\beta.
\end{aligned}
\]
Substitution into \eqref{eq:12} yields the thesis. 
\end{proof}

\begin{remark}
{\rm
We notice that since $\psi=\sum_\alpha\psi_\alpha$ and $\sum_\alpha\phi_\alpha=1$, the energetic parts of the internal forces, defined by
\begin{equation}
\begin{aligned}
\mb f_s^{(e)}&=-\lambda\gradop\phi_s+\gradop\psi_s-\PART{\psi}{\mb F}:\gradop\mb F,\qquad \text{and}\\
\mb f_\beta^{(e)}&=-\lambda\gradop\phi_\beta+\gradop\psi_\beta-\PART{\psi}{\phi_\beta}\gradop\phi_\beta,\quad\beta\ge 2,
\end{aligned}
\end{equation}
add up to null:
\begin{equation}
  \sum_{\alpha}\mb f_\alpha^{(e)}=\mb 0.
\end{equation}
}
\end{remark}

\begin{remark}
The splitting \eqref{eq:33} can also be written as $\mb K_\alpha=\mb K_\alpha^{(e)}+\mb K_\alpha^{(d)}$, with
\begin{subequations}\label{eq:33}
\begin{align}
  &\mb K_s^{(e)}=-\phi_s^{-1}\frac{\partial\psi}{\partial\mb F}\mb F^{T}+\lambda\mb I,\label{eq:33a}\\
  &\mb K_\beta^{(e)}=\frac{\partial\psi}{\partial\phi_\beta}\mb I+\lambda\mb I,\qquad\text{for}\qquad\beta\ge 2,
\end{align}
\end{subequations}
the energetic parts of the chemical potential tensors.
\end{remark}

\begin{remark}[Alternative expressions for the stresses]\label{rem:alt}
{\rm
We note that, since $\phi_\alpha=1/\det\mb F_\alpha$, on letting $\det\mb F_\alpha=J_\alpha$, we can write
\begin{equation}
\frac{\partial\phi_\alpha}{\partial\mb F_\alpha}=\frac{\partial J_\alpha^{-1}}{\partial\mb F_\alpha}=-J_\alpha^{-1}\mb F_\alpha^{-T}=-\phi_\alpha\mb F_\alpha^{-T},
\end{equation}
hence
\begin{equation}\label{eq:48}
  \PART{(\phi_\alpha\Psi_\alpha)}{\mb F_\alpha}\mb F_\alpha^T=\phi_\alpha\PART{\Psi_\alpha}{\mb F_\alpha}\mb F^T_\alpha-\phi_\alpha\Psi_\alpha\mb I.
\end{equation}
We also notice that if $\Psi_\beta$ depends on $\mb F_\alpha$ only through the volume fraction $\phi_\alpha$, then 
\begin{equation}
  \frac{\partial(\phi_\beta\Psi_\beta)}{\partial\bm F_\alpha}\mb F_\alpha^T=-\phi_\beta\phi_\alpha\frac{\partial\Psi_\beta}{\partial\phi_\alpha}\bm I.
\end{equation}
From these observations, \eqref{eq:36} can be given the form
\begin{subequations}\label{eq:53}
\begin{align}
&\phi_s^{-1}\mb T_s=\PART{\Psi_s}{\mb F}\mb F^T-\left(\lambda+\sum_{\beta\ge 2}\phi_\beta\frac{\partial\Psi_s}{\partial\phi_\beta}\right)\mb I-\mb K_s^{(d)},\label{eq:36a}\\  
&\phi_\beta^{-1}\mb T_\beta=-\left(\lambda+\sum_\alpha\phi_\alpha\PART{\Psi_\alpha}{\phi_\beta}\right)\mb I-\mb K_\beta^{(d)},\qquad \beta\ge 2.
\end{align}
\end{subequations}
In some applications, it is convenient to express the dependence of the free energy of the solid component in the form
\begin{equation}
  \widehat\Psi_\alpha(\mb F,\phi_\beta)=\overline\Psi_\alpha(\mb F,\phi_s,\phi_\beta).
\end{equation}
In these cases, the expression \eqref{eq:53} are replaced by
\begin{align}
&\phi_s^{-1}\mb T_s=\PART{\Psi_s}{\mb F}\mb F^T-\left(\lambda+\sum_{\alpha}\phi_\alpha\frac{\partial\overline\Psi_\alpha}{\partial\phi_s}\right)\mb I-\mb K_s^{(d)},\\
&\phi_\beta^{-1}\mb T_\beta=-\left(\lambda+\sum_\alpha\phi_\alpha\PART{\overline\Psi_\alpha}{\phi_\beta}\right)\mb I-\mb K_\beta^{(d)},\qquad \beta\ge 2.
\end{align}
The scalar field
\begin{equation}
  \pi_\alpha
=\sum_{\alpha}\phi_\alpha\frac{\partial\overline\Psi_\alpha}{\partial\phi_s}
\end{equation}
is the \emph{osmotic pressure of the $\alpha$-th component}.
}
\end{remark}

\subsubsection{Reduced balance equations.}
From definition \eqref{eq:34} of the chemical potential tensor we get the expression $\bm T_\alpha=-\phi_\alpha\bm K_\alpha+\psi_\alpha\mb I$ of the partial stress in terms of chemical potential tensor and partial free energy. Substitution of this expression into \eqref{eq:33} yields a set of \emph{reduced constitutive prescriptions} for the stresses:
\begin{subequations}\label{eq:36}
\begin{align}
&\mb T_s=\PART{\psi}{\mb F}\mb F^T+\phi_s(\Psi_s-\lambda)\mb I-\phi_s\mb K_s^{(d)},\label{eq:36a}\\  
&\mb T_\beta=-\phi_\beta\PART{\psi}{\phi_\beta}\mb I+\phi_\beta\left(\Psi_\beta-\lambda\right)\mb I-\phi_\beta\mb K_\beta^{(d)}.
\end{align}
\end{subequations}
These prescriptions, when substituted along with \eqref{eq:28} into the equilibrium equations \eqref{eq:35} yield
\begin{subequations}\label{eq:37}
  \begin{align}
    &\divop\PART{\psi}{\mb F}\mb F^T-\phi_s\gradop\lambda+\PART{\psi}{\mb F}:\gradop\mb F+\mb b_s=\mb f_s^{(d)}+\divop\left(\phi_s\mb K_s^{(d)}\right)\label{eq:37a}\\
    &-\phi_\beta\gradop\left(\PART{\psi}{\phi_\beta}+\lambda\right)+\PART{\psi}{\phi_\beta}\gradop\phi_\beta+\mb b_\beta=\mb f_\beta^{(d)}+\divop\left(\phi_\beta\mb K_\beta^{(d)}\right).\label{eq:37b}
  \end{align}
\end{subequations}

\subsubsection{Alternative format for the reduced balance equations.} 
From the condition $\sum_\alpha\mb f_\alpha=\mb 0$ (which we recall is a consequence of frame indifference of the internal power) and from \eqref{eq:28} we find $\sum_\alpha \mb f_\alpha^{(d)}=\mb 0$. It is convenient to write both the equilibrium equations and the reduced dissipation inequality in a form that does not involve $\mb f_s$ explicitly. As to the equilibrium equations, this is done by adding the equations \eqref{eq:37b} to \eqref{eq:37a} to get:\footnote{Note that $\PART{\psi}{\mb F}:\gradop\mb F+\sum_\beta\PART{\psi}{\phi_\beta}\gradop\phi_\beta=\gradop\psi=\divop(\psi\mb I)$.} 
\begin{subequations}\label{eq:37+}
  \begin{align}
    &\divop\left(\PART{\psi}{\mb F}\mb F^T+\psi\mb I-\sum_{\beta}\phi_\beta\frac{\partial\psi}{\partial\phi_\beta}\right)-\gradop\lambda+\mb b=\divop\left(\sum_\alpha\phi_\alpha\mb K_\alpha^{(d)}\right)\label{eq:37a+}\\
    &-\phi_\beta\gradop\left(\PART{\psi}{\phi_\beta}+\lambda\right)+\mb b_\beta=\mb f_\beta^{(d)}+\divop\left(\phi_\beta\mb K_\beta^{(d)}\right).
  \end{align}
\end{subequations}
The reduced dissipation inequality, on the other hand, is equivalent to
\begin{equation}\label{eq:29}
  -\sum_{\alpha}\phi_\alpha\mb K_\alpha^{(d)}\cdot\gradop\mb v_\alpha+\sum_{\beta}\mb f_\beta^{(d)}\cdot(\mb v_\beta-\mb v_s)\ge 0.
\end{equation}
\begin{remark}
{\rm
The weighted sum $\psi=\sum_\alpha\phi_\alpha\Psi_\alpha$ represents the free energy per unit volume of the mixture. It is immediately seen that the total Cauchy stress $\mb T=\sum_{\alpha}\mb T_\alpha$ depends only on $\psi$. Indeed, from \eqref{eq:36} we have:\footnote{See also Eq. 3.22 of Bowen's paper \cite{Bowen1980}, where the total stress is called the \emph{inner part of the stress} and is denoted by $\mb T_I$.}
\begin{equation}
  \mb T:=\sum_\alpha\mb T_\alpha=\frac{\partial\psi}{\partial\mb F}\mb F^T+(\psi-\lambda)\mb I-\sum_{\beta}\phi_\beta\frac{\partial\psi}{\partial\phi_\beta}-\sum_\alpha\phi_\alpha\mb K^{(d)}_\alpha.
\end{equation}
If we add the partial balances \eqref{eq:35} and we use the fact that the internal forces add up to null, we obtain
\begin{equation}\label{eq:47}
  -\operatorname{div}\mb T=\mb b,
\end{equation}
which coincides with \eqref{eq:37+}. In the case of a body made of a single incompressible constituent, the above equation reduces to
\begin{equation}\label{eq:48}
  \mb T=\frac{\partial\psi}{\partial\mb F}\mb F^T+(\psi-\lambda)\mb I.
\end{equation}
 In this case the referential and spatial energies coincide, because the Jacobian of the deformation map is equal to 1, and so we recover the standard constitutive equation.
}
\end{remark}

\subsubsection{Evolution equations.}
It is immediate from Proposition \ref{prop:1} that constitutive equations consistent with the dissipation inequality \eqref{eq:29} are
\begin{subequations}\label{eq:16}
\begin{align}
&\mb K_\alpha^{(d)}=\mb 0,\label{eq:16a}\\
&\mb f_\beta^{(d)}=\phi_\beta k_\beta(\mb v_\beta-\mb v_s),\qquad k_\beta=\widehat k_\beta(\phi_2,\ldots,\phi_n).
\end{align}
\end{subequations}
This choice models a mixture of an incompressible elastic solid and $N-1$  incompressible inviscid fluids. In particular, the chemical potential tensors of the fluids are $\mb K_\beta=\mu_\beta\mb I$, where
\begin{equation}
\mu_\beta=-(\lambda+p_\beta),\qquad p_\beta:=\PART{\psi}{\phi_\beta}.
\end{equation}
The fields $p_\beta$ are called \emph{partial pressures}. The following system governing the evolution of the mixture is arrived at:
\begin{subequations}\label{eq:37++}
  \begin{align}
    &\divop\left(\PART{\psi}{\mb F}\mb F^T+\psi\mb I-\sum_{\beta}\phi_\beta\frac{\partial\psi}{\partial\phi_\beta}\right)-\gradop\lambda+\mb b=\mb 0,\label{eq:37a++}\\
    &k_\beta(\mb v_\beta-\mb v_s)=-\gradop\left(p_\beta+\lambda\right)\label{eq:37b++}.
  \end{align}
\end{subequations}
A couple of remarks are in order:
\begin{itemize}
 \item the evolution equations \eqref{eq:37} do not depend on how the total free energy $\psi$ splits into its addenda $\psi_\alpha$;
\item however, the boundary conditions in terms of stresses depend on such splitting. Thus how one chooses this splitting has relevant physical consequences;
\end{itemize}

\begin{remark} In continuum mechanics a loading environment is modeled by  prescribing boundary conditions of traction. These conditions are translated in terms of stress through the Cauchy relation (see \eqref{eq:52}), regardless what the constitutive prescription for the stress be. Once the constitutive equation for the free energy is provided, these conditions lead to a precise mathematical statement through a sequence of steps {\it free of ambiguity}. For a superposition of continua, however, there is ambiguity concerning the expression of the partial stresses because of the manifold of choices for the free energies of the individual components. As a result, the same set of boundary conditions of traction type will lead to {\it different mathematical statements} concerning boundary conditions, according to what choice has been made  for the free energies.
\end{remark}

\subsubsection{Ascribing the entire free energy to the solid constituent.} In this case $\psi_\beta=0$ for all $\beta=2,\ldots,N$. The total spatial free energy coincides with that of the solid constituent: 
\[
\psi(\mb F,\phi_\beta)=\psi_s(\mb F,\phi_\beta)=\phi_s\Psi_s(\mb F,\phi_\beta)=(\text{det}\mb F)^{-1}\Psi_s(\mb F,\phi_\beta).
\]
Then 
\[
\PART{\psi}{\mb F}\mb F^T=(\text{det}\mb F)^{-1}\PART{\Psi_s}{\mb F}\mb F^T-\psi_s.
\]
Therefore, the Cauchy stress of the solid constituent is
\[
\mb T_s=\phi_s\PART{\Psi_s}{\mb F}\mb F^T-\phi_s\lambda\mb I.
\]
Moreover, the partial pressures are given by:
\[
p_\beta=\phi_s\PART{\Psi_s}{\phi_\beta}
\]
and the Cauchy stress of the fluid constituent $\beta$ is
\[
\mb T_\beta=-\phi_\beta(p_\beta+\lambda)\mb I,\qquad .
\]
The total Cauchy stress is therefore:
\begin{equation}\label{eq:18}
\mb T=\sum_\alpha \mb T_\alpha=\phi_s\PART{\Psi_s}{\mb F}\mb F^T-\left(\sum_\beta\phi_\beta p_\beta+\lambda\right)\mb I
\end{equation}
Note that we can write
\[
\mb T_\beta=-\phi_\beta q_\beta\mb I,\qquad q_\beta=p_\beta+\lambda.
\]

\subsection{The special case of two constituents with $\psi_\beta=0$.}
\subsubsection{Spatial description}
When the number of fluid constituents reduces to 1, the volumetric constraint \eqref{eq:9} yields $\phi_f=1-\phi_s$, and hence, because of \eqref{eq:7}, $\phi_f=(\text{det}\mb F-1)/\text{det}\mb F$. Thus without loss of generality one can assume that $\Psi_s$ does not depend on $\phi_f$:\footnote{If this is not the case, one can introduce the constitutive mapping:
\[
\widetilde\Psi_s(\mb F):=\widehat\Psi_s\Big(\mb F,\frac{\text{det}\mb F-1}{\text{det}\mb F}\Big).
\]}
\[
\PART{\Psi_s}{\phi_f}=0.
\]
In this case, one has one more reason to think of the free energy as entirely pertaining to the solid constituent and assume $\psi_f=0$. In this special case, the constitutive equation \eqref{eq:18} delivering the total stress takes the form
\[
\mb T=\phi_s\PART{\Psi_s}{\mb F}\mb F^T-\lambda\mb I,
\]
and the partial pressure $p_f$ vanishes. The motion equation of the fluid constituent \eqref{eq:37b} becomes
\begin{equation}\label{eq:19}
k_f(\mb v_f-\mb v_s)=-\gradop\lambda.
\end{equation}

\section{An alternative expression of the internal power and for the dissipation inequality}
In this section we show that the power expenditure within the setting of theories based on interacting continua \cite{Bowen1980,CaldeCLZ2010JoCaTN} and within the setting of a single continuum with microstructure \cite{DudaSF2010JMPS} according to Biot's approach are equivalent, provided that the two power functionals are evaluated at states that are compatibile with the balance equations.

\subsection{The dissipation inequality for a two--component mixture.}
Again, consider a situation when there is only one fluid constituent and the free energy of the mixture (including the interaction energy) is ascribed entirely to the solid constituent, that is to say, $\psi_f=0$. In this case, the dissipation inequality for a fixed spatial region $\Omega$ is  
\begin{equation}\label{eq:20}
  \frac{\rm d}{{\rm d}t}\int_\Omega \psi_s+\int_{\partial\Omega}\psi_s\mb v_s\cdot\mb n\le\int_\Omega\left(\mb T_s\cdot\gradop \mb v_s+\mb T_f\cdot\gradop \mb v_f+\mb f_s\cdot\mb v_s+\mb f_f\cdot\mb v_f\right).
\end{equation}
For the reader's sake we recall the following facts:
\begin{itemize}
\item $\psi_s$ denotes the free energy of the solid constituent per unit spatial volume. Such quantity is related to the corresponding free energy density $\Psi_s$ per unit referential volume of the solid constituent by
\[
\psi_s=\phi_s\Psi_s,
\] 
where $\phi_s=1/{\det \mb F}$ is the volume fraction of the solid constituent.
\item $\mb T_s$ and $\mb T_f$ are the partial Cauchy stresses of the solid and of the fluid, and 
\[
\mathbf T=\mathbf T_s+\mathbf T_f
\]
is the total stress.
\item the fluid stress is $\bm T_f=-\phi_f\lambda\mathbf I$, where $\phi_f$ is the fluid volume fraction and $\lambda$ is the Lagrange multiplier associated to the incompressibility constraint.
\item $-\mb f_s$ and $-\mb f_f$ are the body forces acting on each continuum due to the interaction with the other continuum.
\item under the assumption that external body forces vanish, the principle of virtual powers entails a balance equation 
\[
-\divop\mb T_\alpha+\mb f_\alpha=\mb 0
\]
for each constituent $\alpha=1,2$.
\item invariance of the internal power with respect to superposition of translation fields entails that the interaction forces add up to null: $\mb f_f+\mb f_s=\mb 0$, so that the dissipation inequality can be rewritten as
\begin{equation}\label{eq:43}
  \frac{\rm d}{{\rm d}t}\int_\Omega \psi_s+\int_{\partial\Omega}\psi_s\mb v_s\cdot\mb n\le\int_\Omega\left(\mb T_s\cdot\gradop \mb v_s+\mb T_f\cdot\gradop \mb v_f+\mb f_f\cdot(\mb v_f-\mb v_s)\right).
\end{equation}
\end{itemize}

\subsection{The molar flux and the molar chemical potential.}\label{sec:molar-flux-molar-2}
Now, we perform a sequence of formal manipulations to rewrite the right--hand side of the dissipation inequality: 
\begin{align*}
  \mathscr W_{\rm int}(\Omega)[\mb v_s,\mb v_f]{}&=\int_\Omega\left(\mb T_s\cdot\gradop \mb v_s+\mb T_f\cdot\gradop \mb v_f+\mb f_f\cdot(\mb v_f-\mb v_s)\right)\\
  &=\int_\Omega (\mb T_s+\mb T_f)\cdot\gradop \mb v_s+\mb T_f\cdot\gradop(\mb v_f-\mb v_s)+\mb f_f\cdot(\mb v_f-\mb v_s)\label{eq:eeeee}
    \intertext{\qquad\qquad(on recalling the identity $\divop\mb T_f=\mb f_f$ and on setting $\mb T=\mb T_s+\mb T_f$)}
  &=\int_\Omega \mb T\cdot\gradop\mb v_s+\int_{\partial\Omega}\mb T_f\mb n\cdot(\mb v_f-\mb v_s)\\
  \intertext{\qquad\qquad(on defining the molar flux of fluid relative to the solid $\mb h=(\phi_f/v)(\mb v_f-\mb v_s)$, where $v$ is the molar volume)}
&=\int_\Omega \mb T\cdot\gradop\mb v_s+\int_{\partial\Omega}\frac v {\phi_f}\, \mb T_{f} \mb n\cdot\mb h\\
  \intertext{\qquad\qquad(on recalling that $\mb T_f=-\phi_f\lambda\mb I$ for the fluid constituent)}
  &=\int_\Omega \mb T\cdot\gradop\mb v_s-\int_{\partial\Omega}v \lambda\mb h\cdot\mb n
    \intertext{\qquad\qquad(on introducing the molar chemical potential $\mu=v \lambda$)}
  &=\int_\Omega \mb T\cdot\gradop\mb v_s-\int_{\partial\Omega}\mu\mb h\cdot\mb n.
\end{align*}
We have thus proved the following result.
\begin{proposition}The dissipation inequality \eqref{eq:43} is equivalent to 
\begin{equation}\label{eq:44}
  \frac{\rm d}{{\rm d}t}\int_\Omega \psi_s+\int_{\partial\Omega}\psi_s\mb v_s\cdot\mb n\le\int_\Omega \mb T\cdot\gradop\mb v_s-\int_{\partial\Omega}\mu\mb h\cdot\mb n.
\end{equation}
where, if $v$ the molar volume of the fluid molecules, then
\[
\mb h=(\phi_f/v)(\mb v_f-\mb v_s)
\]
is the \emph{molar flux of fluid relative to the solid}, and
\[
\mu=v\lambda
\]
is the \emph{molar chemical potential of the fluid}.
\end{proposition}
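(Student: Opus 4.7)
My plan is to carry out exactly the chain of formal manipulations indicated in the display preceding the proposition, supplying the divergence--theorem step that is implicit in the middle of that display. The starting point will be the right--hand side $\mathscr W_{\rm int}(\Omega)[\mb v_s,\mb v_f]$ of \eqref{eq:43}. The overall strategy is to first rewrite the stress terms so that the common motion of the mixture (represented by $\mb v_s$) is decoupled from the relative motion $\mb v_f-\mb v_s$, then to integrate the part involving $\mb T_f$ by parts and use the balance equation $\divop\mb T_f=\mb f_f$ to swallow the internal--force contribution, and finally to substitute the constitutive form $\mb T_f=-\phi_f\lambda\mb I$ and rename variables.

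First I would add and subtract $\mb T_f\cdot\gradop\mb v_s$ to split the stress power as
\[
\mb T_s\cdot\gradop\mb v_s+\mb T_f\cdot\gradop\mb v_f=(\mb T_s+\mb T_f)\cdot\gradop\mb v_s+\mb T_f\cdot\gradop(\mb v_f-\mb v_s)=\mb T\cdot\gradop\mb v_s+\mb T_f\cdot\gradop(\mb v_f-\mb v_s).
\]
Next I would apply the divergence theorem to the second summand:
\[
\int_\Omega\mb T_f\cdot\gradop(\mb v_f-\mb v_s)=\int_{\partial\Omega}\mb T_f\mb n\cdot(\mb v_f-\mb v_s)-\int_\Omega(\divop\mb T_f)\cdot(\mb v_f-\mb v_s).
\]
Invoking the partial balance $\divop\mb T_f=\mb f_f$ listed among the recalled facts, the bulk integral on the right cancels exactly the term $\mb f_f\cdot(\mb v_f-\mb v_s)$ that appears in $\mathscr W_{\rm int}$. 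This leaves
\[
\mathscr W_{\rm int}(\Omega)[\mb v_s,\mb v_f]=\int_\Omega\mb T\cdot\gradop\mb v_s+\int_{\partial\Omega}\mb T_f\mb n\cdot(\mb v_f-\mb v_s).
\]

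The rest is purely algebraic. I substitute $\mb T_f=-\phi_f\lambda\mb I$, so that $\mb T_f\mb n=-\phi_f\lambda\mb n$, and express $\mb v_f-\mb v_s=(v/\phi_f)\mb h$ in terms of the molar flux $\mb h=(\phi_f/v)(\mb v_f-\mb v_s)$. The boundary integrand becomes $-\phi_f\lambda\mb n\cdot(v/\phi_f)\mb h=-v\lambda\,\mb h\cdot\mb n$, and renaming $\mu=v\lambda$ turns it into $-\mu\mb h\cdot\mb n$. Plugging this identity for $\mathscr W_{\rm int}$ back into \eqref{eq:43} yields \eqref{eq:44}, and since each step is reversible (the divergence theorem and the definitions of $\mb h$ and $\mu$ are invertible), the two inequalities are equivalent.

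I do not expect any serious obstacle: the only point that requires a little care is ensuring that the boundary term produced by integration by parts is precisely cancelled against $\int_\Omega\mb f_f\cdot(\mb v_f-\mb v_s)$, which relies on the partial equilibrium equation for the fluid constituent (with vanishing external body force). The sign conventions for $\mb f_f$ as an internal force and for the outward unit normal $\mb n$ should be tracked to make sure the cancellation is exact; once that is verified, the remaining manipulations are bookkeeping.
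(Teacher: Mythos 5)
Your argument is correct and is essentially the paper's own proof: the same split of the stress power into $\mb T\cdot\gradop\mb v_s$ plus $\mb T_f\cdot\gradop(\mb v_f-\mb v_s)$, integration by parts combined with $\divop\mb T_f=\mb f_f$ to absorb the interaction term, then substitution of $\mb T_f=-\phi_f\lambda\mb I$, $\mb h=(\phi_f/v)(\mb v_f-\mb v_s)$ and $\mu=v\lambda$; you merely make the divergence-theorem step explicit where the paper leaves it implicit. Only a small wording slip in your closing remark: it is the \emph{bulk} term $\int_\Omega(\divop\mb T_f)\cdot(\mb v_f-\mb v_s)$, not the boundary term, that cancels against $\int_\Omega\mb f_f\cdot(\mb v_f-\mb v_s)$, exactly as you stated correctly in the body of the proof.
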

\begin{remark}
  The ratio
\begin{equation}\label{eq:56}
  c=\phi_f/v
\end{equation}
  is the spatial concentraion (number of moles) of fluid per unit volume.
  \end{remark}

\subsection{Recovering the dissipation inequality of the homogenized solid}
Next, consider a time-dependent spatial region $\Omega(t)$ that convects with the solid:
\[
\Omega(t)=\bm\chi_s(\mathcal P_s,t)\qquad\text{for some body part $\mathcal P_s$ of the solid}, 
\]
and coincides with $\Omega$ at a given time $t_0$:
\[
\Omega(t_0)=\Omega.
\]
It is standard to establish that, at time $t_0$,
\begin{equation}
  \Dt\int_\Omega \psi_s+\int_{\partial\Omega}\psi_s\mb v_s\cdot\mb n=\Dt\int_{\Omega(t)}\psi_s=\int_{\Omega}\phi_s\dot\Psi_s.
\end{equation}
Thus, the dissipation inequality can be given the form
\begin{equation}\label{eq:17}
  \int_\Omega \phi_s\dot\Psi_s\le\int_\Omega \mb T\cdot\gradop\mb v_s+\int_{\partial\Omega}\mu\mb h\cdot\mb n.
\end{equation}
Using \eqref{eq:17}, it is now straightforward to prove the next proposition, which yields the dissipation inequality used in \cite{DudaSF2010JMPS}, where the mixture is treated as a single solid.
\begin{proposition}Satisfaction of the the dissipation inequality \eqref{eq:44} for every spatial control volume $\Omega$ engaged by the solid component is equivalent to asking that, for each body part $\mathcal P_s$ of the solid,
\begin{equation}
  \int_{\mathcal P_s}\dot\Psi_s\le \int_{\mc P_s} \mb T_R\cdot\gradop_R\mb v_s+\int_{\partial\mathcal P_s}\mu\mb h_{R}\cdot\mb n_R,
\end{equation}
where
\begin{equation}
  \mb T_R=\mb T\mb  F^\star=(\det\mb F)\mb T\mb F^{-\sf T},\qquad \mb h_R=(\mb  F^\star)^{\sf T}\mb h=(\det \mb F)\mb F^{-1}\mb h.
\end{equation}
are, respectively, the referential stress and the referential flux, $\gradop_R$ denotes the referential gradient and $\mb n_R$ is the outward unit normal to $\mathcal P_s$. 
\end{proposition}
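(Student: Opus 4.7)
The plan is to start from inequality \eqref{eq:17}, which already expresses the left-hand side of \eqref{eq:44} as $\int_\Omega\phi_s\dot\Psi_s$ using the volume convected with the solid, and then to pull each term back to the reference configuration of the solid by means of the Piola transform. The three ingredients required are: (i) the change-of-variables formula $dv=(\det\mathbf F)\,dV$, equivalently $\phi_s\,dv=dV$; (ii) the chain-rule identity $\gradop\mathbf v_s=(\gradop_R\mathbf v_s)\mathbf F^{-1}$; and (iii) Nanson's formula $\mathbf n\,da=(\det\mathbf F)\mathbf F^{-\sf T}\mathbf n_R\,dA$.

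First, since $\phi_s=(\det\mathbf F)^{-1}$, the volume-element ratio gives
\[
\int_\Omega\phi_s\dot\Psi_s=\int_{\mathcal P_s}\dot\Psi_s,
\]
which is already the desired referential form. Next, using $dv=(\det\mathbf F)\,dV$ together with the Frobenius-product identity $\mathbf A\cdot\mathbf B\mathbf C^{-1}=\mathbf A\mathbf C^{-\sf T}\cdot\mathbf B$, one obtains
\[
\int_\Omega\mathbf T\cdot\gradop\mathbf v_s=\int_{\mathcal P_s}(\det\mathbf F)\,\mathbf T\mathbf F^{-\sf T}\cdot\gradop_R\mathbf v_s=\int_{\mathcal P_s}\mathbf T_R\cdot\gradop_R\mathbf v_s,
\]
which matches the announced definition $\mathbf T_R=\mathbf T\mathbf F^\star$. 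Finally, Nanson's formula converts the boundary flux:
\[
\int_{\partial\Omega}\mu\mathbf h\cdot\mathbf n=\int_{\partial\mathcal P_s}\mu\,\mathbf h\cdot(\det\mathbf F)\mathbf F^{-\sf T}\mathbf n_R=\int_{\partial\mathcal P_s}\mu\,(\det\mathbf F)\mathbf F^{-1}\mathbf h\cdot\mathbf n_R=\int_{\partial\mathcal P_s}\mu\,\mathbf h_R\cdot\mathbf n_R,
\]
giving $\mathbf h_R=(\mathbf F^\star)^{\sf T}\mathbf h$. Substituting the three transformations into \eqref{eq:17} yields the referential inequality in the proposition.

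For the equivalence of the two quantifications, I would observe that the map $\mathcal P_s\mapsto\bm\chi_s(\mathcal P_s,t_0)=\Omega$ is a bijection between parts of the solid's reference body and the spatial control volumes they engage at the fixed time $t_0$, so requiring the inequality for all admissible $\Omega$ is the same as requiring it for all $\mathcal P_s$. I do not expect a genuine obstacle: the argument is a textbook application of the Piola transform. The only step that requires care is bookkeeping the position of the transpose in the Frobenius contraction so that the correct powers of $\mathbf F$ land in $\mathbf T_R$ and in $\mathbf h_R$; once that is checked, the rest is routine.
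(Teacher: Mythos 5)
Your proof is correct and follows essentially the route the paper intends when it calls the proposition ``straightforward'' from \eqref{eq:17}: pull the volume integrals back with $\phi_s\,dv=dV$, the stress power with $\gradop\mb v_s=(\gradop_R\mb v_s)\mb F^{-1}$ and the Frobenius identity, and the boundary flux with Nanson's formula, the bijection between solid parts $\mathcal P_s$ and the control volumes they engage giving the equivalence of the two quantifications. (The only wrinkle---the sign of the flux term differing between \eqref{eq:44} and \eqref{eq:17}---is an inconsistency in the paper itself, and your computation matches the proposition as stated.)
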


\subsection{Referential form of the motion equation of the fluid constituent.}
Recalling the identity 
\[
\gradop_R\lambda=\mb F^{T}\gradop \lambda.
\]
we can write the spatial form of the motion equation \eqref{eq:19} as
\[
-\gradop_R\lambda-k_f(\det\mb F)^{-1}\mb F^T\mb F\mb h_R=\mb 0.
\]
This yields the following relation for the referential flux:
\begin{equation}\label{eq:42}
\mb h_R=-k_f^{-1}(\det\mb F)(\mathbf F^T\mathbf F)^{-1}\gradop_R\lambda=-(v k_f)^{-1}(\det\mb F)(\mathbf F^T\mathbf F)^{-1}\gradop_R\mu.
\end{equation}
This relation that agrees with the constitutive equation for the flux adopted in \cite{DudaSF2010JMPS}. There is an important difference, however: according to \cite{DudaSF2010JMPS}, \eqref{eq:42} is a constitutive equation. In our case, it is the consequence of an equilibrium equation.

\section{Comparison with a theory describing perfusion in a porous matrix}
The virtual-power format for poromechanics offered in \cite{DiCarlo2017}, which has been proposed to describe perfusion and adsorption in a porous matrix, may be interpreted as being partway between the Biot-type framework and theories based on interacting continua. In accordance with Biot's point of view, the solid constituent plays a predominant role, since its reference configuration is used as the \emph{background} on which all phenomena are described. On the other hand, as in the theory of superposed continua: (i) the velocity of the fluid component, rather than its flux, appears explicitly in the power expenditure; (ii) Darcy's law emerges as a combination of a force balance and a constitutive equation. In particular, the pressure appears in the position of a force \emph{both in the virtual-work formulation and in the dissipation inequality}.

We shall limit our attention to a maimed version of that theory, whereby pores are \emph{fully saturated} and the balance principle for the interstitial fluid mass (specifically, \cite[Eq. 4]{DiCarlo2017}) does not include mass supply. In that version, the \emph{referential} internal power expended within the typical part $\mathcal P_s$ is:
\begin{equation}\label{eq:54}
  \mathscr W_{\rm int}(\mathcal P_s)[\mb v_s,\mb v_s]=\int_{\mathcal P_s}\bigl(\mb T_R\cdot\operatorname{grad}_R\mb v_s+\mb f_{f,R}\cdot(\mb v_f-\mb v_s)-p v\dot c_R\bigr).
\end{equation}
We recall that $c_R$ is the spatial concentration (density of moles) of the fluid component multiplied by the ratio $J$ between the current volume and the referential volume, and is related to the actual concentration $c$ and to the spatial volume fraction $\phi_f$ of the fluid by:  
\begin{equation}\label{eq:57}
  c_R=Jc=J v\phi_f.
\end{equation}
To compare our \eqref{eq:54} with the statement \cite[Eq. 11]{DiCarlo2017}, one should bear in mind that the latter contains both the internal and the external \emph{virtual} power expenditure. In particular, the body forces $\mb b_s$ and $\mb b_f$ in \cite[Eq. 11]{DiCarlo2017} account not only for the interaction force between the two constituents, but also the interaction of these constituent  with the exterior. Moreover, the same token leading to \eqref{eq:2}, namely, invariance of the internal power under a change of observer, entails that the internal forces acting on each constituent must sum to null. Finally, the \emph{porosity} field $\varphi$ in \cite{DiCarlo2017}, which  is interpreted as the amount of \emph{actual} volume of physical space occupied by the pores per unit volume in the \emph{reference configuration}, is the product $v c_R$ between the molar volume $v$ and the referential molar concentration $c_R$.\footnote{On denoting by $dV_{pores,cur}$, $dV_{fluid,cur}$, and by  $dN$, respectively,  the volume occupied by the pores, the volume occupied by the fluid, and the number moles of fluid contained in an infinitesimal volume $dV_{cur}$ in the \emph{current} configuration, we have, under the assumption of saturation, that $dV_{pores,cur}=dV_{fluid,cur}$. Thus, since $\frac{dN}{dV_{fluid,cur}}=v$ is the fluid molar density, and $c=\frac{dN}{dV_{cur}}$ is the current fluid concentration, we can write $\frac{dV_{pores,cur}}{dV_{cur}}= \frac{dN}{dV_{cur}}\frac{dV_{pores,cur}}{dN}=c v$. The porosity $\varphi$ in \cite{DiCarlo2017} stands for $\frac{dV_{pores,cur}}{dV_{ref}}$, thus we can write $\varphi=J\frac{dV_{pores,cur}}{dV_{cur}}$, where $J$ is the Jacobian. Thus, we can make the identification $\varphi=c_R v$.}

A key step in \cite{DiCarlo2017} is the introduction of a \emph{perfusion velocity}
\begin{equation}\label{eq:58}
  \mathbf w=\mathbf F^{-1}(\mb v_f-\mb v_s),
\end{equation}
as the material field deemed appropriate to describe the motion of the fluid relative to the solid. This field is related to the referential concentration by the \emph{referential molar balance}
\begin{equation}\label{eq:55}
  \dot c_R+\operatorname{div}_R(c_R\mb w)=0.
\end{equation}
The referential molar balance may be obtained in several ways. We illustrate a derivation at the end of the present section, starting from an analogous statement (see \eqref{eq:59} below) which we have already derived in the spatial setting.

Although \eqref{eq:55} involves rates, it is actually a \emph{holonomic constraint}. Indeed, given \emph{any} deformation map $\bm\chi_{sf}$ mapping the reference space of the solid component into itself, taken among the many which we may choose to describe the \emph{motion of the fluid component velocity} $\mb w$, the integration of \eqref{eq:55} yields that the Jacobian $J_{sf}=\det\operatorname{grad}_R\bm\chi_{sf}$ obeys:\footnote{It follow from \eqref{eq:60}  that \emph{virtual variations} of referential concentration and perfusion velocity are related by the constraint
\begin{equation*}\label{eq:3}
   \delta c_R+\operatorname{div}_R(c_R\delta\mb w)=0.
 \end{equation*}
}
 \begin{equation}\label{eq:60}
   J_{sf}c_R=\text{constant}.
 \end{equation}

With the position \eqref{eq:58} and \eqref{eq:55}, the internal-power expenditure \eqref{eq:54} takes the form of a linear functional on the pair $(\mb v_s,\mb w)$ (\emph{cf.} \cite[Eq. 12]{DiCarlo2017}): 
\begin{equation}
  \mathscr W_{\rm int}(\mathcal P_s)[\mb v_s,\mb w]=\int_{\mathcal P_s}\left(\mb T_R\cdot\gradop_R \mb v_s+\mb F^T\mb f_{f,R}\cdot\mb w+p v\operatorname{div}(c_R\mb w)\right).
\end{equation}
Now, we can perform a change of variable to write the internal power in terms of the spatial description of the fields of interest. Observe that on setting $\widetilde{\mb f}_{f}=J\mb f_{f,R}$, we have
\begin{equation}
  \int_{\mathcal P_s}\mb F^T\mb f_{f,R}\cdot\mb w=\int_{\Omega}J\mb F^T\mb f_{f,R}\cdot\mb w=\int_{\Omega}\widetilde{\mb f}_{f}\cdot(\mb v_f-\mb v_s),
\end{equation}
Moreover, we have
\begin{equation}
  \begin{aligned}
    \int_{\mathcal P_s}pv\operatorname{div}_R(c_R\mb w)&{}=\int_{\mathcal P_s}\operatorname{grad}_R(pv)\cdot c_R\mb w-\int_{\partial{\mathcal P_s}}pv c_R\mb w\cdot\mb n_R\\
    &=\int_{\Omega}J^{-1}\mb F^T\operatorname{grad}(pv)\cdot c_R\mb w-\int_{\partial{\Omega}}pv c_R\mb w\cdot\mb F^{-\star}\mb n\\
    &=\int_{\Omega}\operatorname{grad}(pv)\cdot c(\mb v_f-\mb v_s)-\int_{\partial{\Omega}}pv c(\mb v_f-\mb v_s)\cdot\mb n\\
    &=\int_{\Omega}p v\operatorname{div}(c(\mb v_f-\mb v_s))
    \\
    &=\int_{\Omega}p vc\operatorname{div}(\mb v_f-\mb v_s)+p v\nabla c\cdot(\mb v_f-\mb v_s)
    \end{aligned}
  \end{equation}
Thus, the internal power can be written as
\begin{equation}
  \mathscr W_{\rm int}(\mathcal P_s)[\mb v_s,\mb v_f]=\int_{\mathcal P_s}\left(\mb T\cdot\gradop \mb v_s-pvc\mb I\cdot\gradop(\mb v_f-\mb v_s)+(\widetilde{\mb f}_f+pv\nabla c)\cdot(\mb v_f-\mb v_s)\right).
\end{equation}
We now recover the expression \eqref{eq:61} of the internal power (in the special case $n=2$), provided we set
\begin{equation}
  \mb T_s=\mb T+pcv\mb I,\qquad \mb T_f=-pvc\mb I,\qquad \mb f_s=\widetilde{\mb f}_f+pv\nabla c,
\end{equation}
where the last term is the so-called \emph{buoyancy force}, related to spatial changes of fluid concentration (\emph{i.e.} spatial changes of porosity).

We conclude this section by checking that the referential molar balance \eqref{eq:55} may be obtained from the \emph{spatial  molar balance for the fluid components}:
\begin{equation}\label{eq:59}
  \frac{\partial c}{\partial t}+\operatorname{div}(c\mb v_f)=0,
\end{equation}
the latter being a consequence of the second equality in the chain \eqref{eq:57} and --- given that the molar volume $v$ of the (incompressible) fluid component is constant --- of the identity \eqref{eq:11}. From the relation $c_R(\mb X,t)=J(\mb X,t)c(\chi_{\rm s}(\mb X,t),t)$ and from the identity $\dot J=J\operatorname{div}\mb v_s$ we obtain
\begin{equation}
  \dot c_R=J\bigl(c\operatorname{div}{\mb v_s}+\frac{\partial c}{\partial t}+\operatorname{grad}c\cdot\mb v_s\bigr)=J\operatorname{div}\bigl(c(\mb v_s-\mb v_f)\bigr).
\end{equation}
Now, recall that for every time-dependent spatial vector field $\mb a$,
\begin{equation}
  J(\operatorname{div}\mb a)_r=\operatorname{div}_R((\mb F^*)^T\mb a_r),
\end{equation}
where $\mb a_r$, the referential description of $\mb a$, is related to $\mb a$ by $\mb a_r(\mb X,t)=\mb a(\bm\chi_s(\mb X,t),t)$. As a result we obtain 
\begin{equation}
  \dot c_R=\operatorname{div}_R(c (\mb F^*)^T(\mb v_f-\mb v_s)_r)=\operatorname{div}_R(Jc \mb F^{-1}(\mb v_f-\mb v_s))=\operatorname{div}_R(c_R\mb w),
\end{equation}
as desired.


\begin{thebibliography}{10}

\bibitem{BaekSrinivasa2004}
S.~Baek and A.~Srinivasa.
\newblock Diffusion of a fluid through an elastic solid undergoing large
  deformation.
\newblock {\em\JournalTitle{International Journal of non-linear Mechanics}},
  39(2):201--218, 2004.

\bibitem{Biot1972IUMJ}
M.~Biot.
\newblock Theory of finite deformations of porous solids.
\newblock {\em\JournalTitle{Indiana University Mathematics Journal}},
  21(7):597--620, 1972.

\bibitem{Biot1941Joap}
M.~A. Biot.
\newblock General theory of three-dimensional consolidation.
\newblock {\em\JournalTitle{J. Appl. Phys.}}, 12(2):155--164, 1941.

\bibitem{Bowen1980}
R.~M. Bowen.
\newblock Incompressible porous media models by use of the theory of mixtures.
\newblock {\em\JournalTitle{International Journal of Engineering Science}},
  18(9):1129--1148, 1980.

\bibitem{CaldeCLZ2010JoCaTN}
M.~C. Calderer, B.~Chabaud, S.~Lyu, and H.~Zhang.
\newblock Modeling approaches to the dynamics of hydrogel swelling.
\newblock {\em\JournalTitle{J Comput Theor Nanos}}, 7(4):766--779, 2010.

\bibitem{ChestA2010JMPS}
S.~A. Chester and L.~Anand.
\newblock A coupled theory of fluid permeation and large deformations for
  elastomeric materials.
\newblock {\em\JournalTitle{Journal of the Mechanics and Physics of Solids}},
  58(11):1879--1906, 2010.

\bibitem{DiCarlo2017}
A.~DiCarlo, S.~Naili, P.~Podio-Guidugli, and V.~Sansalone.
\newblock {Perfusion and Absorption in a Porous Matrix}.
\newblock In {\em Poromechanics VI}, pages 312--320, Reston, VA, jul 2017.
  American Society of Civil Engineers.

\bibitem{DudaSF2010JMPS}
F.~P. Duda, A.~C. Souza, and E.~Fried.
\newblock {A theory for species migration in a finitely strained solid with
  application to polymer network swelling}.
\newblock {\em\JournalTitle{Journal of the Mechanics and Physics of Solids}},
  58(4):515--529, 2010.

\bibitem{Gibbs1878}
J.~W. Gibbs.
\newblock On the equilibrium of heterogeneous substances.
\newblock {\em\JournalTitle{American Journal of Science}}, (96):441--458, 1878.

\bibitem{HongZZS2008JMPS}
W.~Hong, X.~Zhao, J.~Zhou, and Z.~Suo.
\newblock A theory of coupled diffusion and large deformation in polymeric
  gels.
\newblock {\em\JournalTitle{Journal of the Mechanics and Physics of Solids}},
  56(5):1779--1793, 2008.

\bibitem{ShiRajagopalWineman1981}
J.~J.-J. Shi, K.~Rajagopal, and A.~Wineman.
\newblock Applications of the theory of interacting continua to the diffusion
  of a fluid through a non-linear elastic media.
\newblock {\em\JournalTitle{International Journal of Engineering Science}},
  19(6):871--889, 1981.

\bibitem{Truesdell1962JCP}
C.~Truesdell.
\newblock Mechanical basis of diffusion.
\newblock {\em\JournalTitle{Journal of Chemical Physics}}, 37(10):2336--2344,
  1962.

\end{thebibliography}

\end{sloppypar}
\end{document}